\newcommand{\bra}[1]{\left\langle{#1}\right\vert}
\newcommand{\ket}[1]{\left\vert{#1}\right\rangle}
\newcommand{\qw}[1][-1]{\ar @{-} [0,#1]}
\newcommand{\qwx}[1][-1]{\ar @{-} [#1,0]}
\newcommand{\gate}[1]{*{\xy *+<.6em>{#1};p\save+LU;+RU **\dir{-}\restore\save+RU;+RD **\dir{-}\restore\save+RD;+LD **\dir{-}\restore\POS+LD;+LU **\dir{-}\endxy} \qw}
\newcommand{\control}{*!<0em,.025em>-=-{\bullet}}
\newcommand{\ctrl}[1]{\control \qwx[#1] \qw}
\newcommand{\targ}{*!<0em,.019em>=<.79em,.68em>{\xy {<0em,0em>*{} \ar @{ - } +<.4em,0em> \ar @{ - } -<.4em,0em> \ar @{ - } +<0em,.36em> \ar @{ - } -<0em,.36em>},<0em,-.019em>*+<.8em>\frm{o}\endxy} \qw}
\newcommand{\lstick}[1]{*!R!<.5em,0em>=<0em>{#1}}
\newcommand{\Qcircuit}[1][0em]{\xymatrix @*[o] @*=<#1>}
\DeclareMathOperator{\tr}{tr}
\DeclareMathOperator{\supp}{supp}
\newcommand{\mc}[1]{\mathcal{#1}}
\newcommand{\C}{\mathbb{C}}
\newcommand{\F}{\mathbb{F}}
\newcommand{\nix}[1]{}
\newtheorem{theorem}{Theorem}
\newtheorem{corollary}[theorem]{Corollary}
\newtheorem{lemma}[theorem]{Lemma}
\newtheorem{remark}[theorem]{Remark}
\begin{document}

\title{Sharing Classical Secrets with CSS Codes}

\author{Pradeep Kiran Sarvepalli}
 \email{pradeep@phas.ubc.ca}
 \affiliation{Department of Physics and Astronomy\\
University of British Columbia, Vancouver V6T 1Z1, Canada
}
\author{Andreas Klappenecker}
 \email{klappi@cs.tamu.edu}
\affiliation{Department of Computer Science\\
Texas A\&M University, College Station TX 77843-3112, USA
}

\date{May 11, 2009}
\begin{abstract}
In this paper we investigate the use of quantum information to share
classical secrets.  While every quantum secret sharing scheme is a
quantum error correcting code, the converse is not true. Motivated by
this we sought to find quantum codes which can be converted to secret
sharing schemes. If we are interested in sharing classical secrets
using quantum information, then we show that a class of pure $[[n,1,d]]_q$
CSS codes can be converted to perfect secret sharing schemes. These
secret sharing schemes are perfect in the sense the unauthorized
parties do not learn anything about the secret. Gottesman had given
conditions to test whether a given subset is an authorized or
unauthorized set; they enable us to determine the access structure of
quantum secret sharing schemes. For the secret sharing schemes
proposed in this paper the access structure can be characterized in
terms of minimal codewords of the classical code underlying the CSS
code. This characterization of the access structure for quantum secret
sharing schemes is thought to be new. 
\end{abstract}

\pacs{}
\keywords{quantum secret sharing, CSS quantum codes, access structure, minimal codewords}
\maketitle

\section{Introduction}\label{sec:intro}
Quantum secret sharing schemes deal with the distribution of an
arbitrary secret state among $n$ parties (or shares) using quantum
states such that only authorized subsets can reconstruct the secret.
One can broadly classify quantum secret sharing schemes into the
category of schemes allowing one to (a) share quantum secrets and (b)
share classical secrets. This paper focuses on schemes of the second 
category (b). 

Quantum secret sharing schemes of both categories were introduced by
Hillery, Buzek, and Berthiaume~\cite{hillery99}. Classically, one can
always associate an error-correcting code to a perfect secret sharing
scheme---though determining the access structure of the associated
scheme is in general a very hard problem.  Additionally, one can also
derive a secret sharing scheme from a classical code, as was first
illustrated by the work of Massey \cite{massey93}.  It is not
surprising therefore that one finds connections between quantum secret
sharing schemes and quantum codes.  The connection between quantum
secret sharing schemes and quantum codes was first made explicit in
the work of Cleve et al. \cite{cleve99} and in greater depth by
Gottesman in \cite{gottesman00}. In fact  Cleve et al \cite{cleve99} showed that
quantum threshold schemes can be derived from quantum MDS codes and
gave an explicit method for these schemes. More recently, Rietjens et
al. \cite{rietjens05} showed that a $((k,2k-1))$ quantum threshold
scheme exists if and only if a $[[2k-1,1,k]]_q$ quantum MDS code
exists, thereby extending an earlier result found in \cite{cleve99}

However, the correspondence between quantum codes and quantum secret
sharing schemes does not appear to be as strong as in the classical
case. In general, one cannot derive a perfect quantum secret sharing
scheme from a quantum code.  However, if we restrict our secrets to
being classical, then we can relate a pure $[[n,1,d]]$ CSS code to a
secret sharing scheme, as we will show below.

Using quantum information to secure classical secrets has been studied
earlier in \cite{hillery99,karlsson99,gottesman00}. Gottesman had
given a convenient characterization of the access structure of secret
sharing schemes in \cite{gottesman00}. For the schemes proposed in
this paper we provide an alternative characterization which can
perhaps be extended to more quantum state sharing schemes.  The access
structures of the secret sharing schemes proposed in this paper can be
characterized using the notion of minimal codewords, a concept which
was introduced by Massey \cite{massey93} in the context of classical
secret sharing. Additionally, we draw upon the work of Gottesman
\cite{gottesman00} to link these ideas.

\subsection{Background: Quantum Secret Sharing} 
Let the parties of a secret sharing scheme be $P=\{ P_1,\ldots,P_n \}$.
Any subset of $P$ that can reconstruct the secret is called an authorized set. 
Subsets which cannot reconstruct the secret are called unauthorized sets. The collection
of authorized sets is called the access structure of the scheme and denoted by 
$\Gamma$. The collection of unauthorized sets is called the adversary structure and
denoted by $\mc{A}$. Clearly, $\Gamma \cup \mc{A} = 2^P$, the power set of $P$.
A minimal authorized set is one which can be used to reconstruct the secret but no proper
subset of which can reconstruct the secret. Clearly any subset which contains a minimal authorized set
is also authorized. The minimal access structure of the secret sharing scheme is the multiset consisting of 
minimal authorized sets. We denote a secret sharing scheme with (minimal) access structure $\Gamma_m$ as
$(\Sigma,\Gamma_m)$.

Of course, the secret sharing scheme must specify  more than the access structure. It must specify a 
means to encode the secret into the $n$ different shares and how any authorized set can recover the secret. 
In the language of quantum error correction these two tasks translate into encoding and decoding of a quantum
state which has been transmitted through a noisy quantum channel, in this case the quantum erasure channel. 
A secret sharing scheme is said to be perfect if 
\begin{compactenum}[i)]
\item an authorized set exactly reconstructs the secret 
\item an unauthorized cannot extract any information about the secret
\end{compactenum}

Essentially, any perfect secret sharing scheme must satisfy two requirements. On one hand, there is the requirement
of secrecy; any unauthorized set must know nothing about the secret. On the other hand, there is the requirement
of recoverability; any authorized set must be able to reconstruct the secret. One can also give a 
quantum information theoretic characterization of these requirements as was done in \cite{imai04}.
A characterization of these requirements for quantum secret sharing schemes can be found in \cite{gottesman00}, see also
\cite{cleve99}. 
In particular, for classical secrets this formulation is given as follows, see \cite[Theorem~9]{gottesman00}
for details.

\begin{lemma}[Gottesman]\label{lm:accessConditions}
Suppose we have a set of orthonormal states $\ket{\psi_i}$ encoding a classical secret. Then a set $T$
is an unauthorized set iff
\begin{eqnarray}
\bra {\psi_i} F \ket{\psi_i} =c(F)\label{eq:unauthSet}
\end{eqnarray}
independent of i for all operators $F$ on $T$. The set $T$ is authorized iff
\begin{eqnarray}
\bra {\psi_i} E \ket{\psi_j} =0 \quad (i\neq j)\label{eq:authSet}
\end{eqnarray}
for all operators $E$ on the complement of $T$.
\end{lemma}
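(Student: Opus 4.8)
The plan is to treat the two claims separately, in each case translating the operational notion---secrecy for an unauthorized set, recoverability for an authorized set---into a statement about the reduced density matrices $\rho_i^T := \tr_{T^c}\ket{\psi_i}\bra{\psi_i}$ of the encoded states, where $T^c$ is the complement of $T$. For the unauthorized direction, I would first note that an operator $F$ supported on $T$ acts as $F\otimes I$ on the full system, so $\bra{\psi_i} F \ket{\psi_i} = \tr(F\rho_i^T)$. The requirement that $T$ learns nothing about the secret means no measurement confined to $T$ can distinguish the $\ket{\psi_i}$; since all measurement statistics on $T$ are determined by $\rho_i^T$, this is equivalent to $\rho_i^T$ being independent of $i$. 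I would then argue that $\rho_i^T$ is $i$-independent iff $\tr(F\rho_i^T)$ is $i$-independent for every operator $F$ on $T$: the forward implication is immediate, and the converse follows because a density operator is uniquely fixed by the expectation values of a spanning family of (Hermitian) operators. Writing $c(F)=\tr(F\rho^T)$ for the common reduced state then gives \eqref{eq:unauthSet}.

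For the authorized direction I would invoke the quantum error-correction picture: the set $T$ reconstructs the secret exactly when the classical label $i$ can be recovered from the subsystem $T$ after \emph{arbitrary} action on $T^c$, which models the loss of the parties outside $T$. The Knill--Laflamme conditions say that errors on $T^c$ are correctable iff $\bra{\psi_i} E_a^\dagger E_b \ket{\psi_j}=C_{ab}\,\delta_{ij}$ for all such $E_a,E_b$. Because only a classical secret must be protected, I do not need to preserve coherences between the $\ket{\psi_i}$; it suffices that distinct secrets stay perfectly distinguishable from $T$ no matter what is done on $T^c$. This keeps only the off-diagonal ($i\neq j$) part of the condition, and since the operators on $T^c$ form an algebra, $E_a^\dagger E_b$ is again an operator on $T^c$, so the single-operator requirement $\bra{\psi_i} E \ket{\psi_j}=0$ of \eqref{eq:authSet} already captures it. The ``only if'' direction is the easier one: a nonzero off-diagonal overlap for some $E$ on $T^c$ would let an adversary operating outside $T$ blur two secrets, contradicting exact reconstruction from $T$.

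I expect the genuine obstacle to be the authorized ``if'' direction---turning the algebraic condition \eqref{eq:authSet} into an actual recovery procedure acting only on $T$. The delicate point is that $E$ ranges over all operators on $T^c$ rather than a correctable error set fixed in advance, so I would pass to a spanning family of operators on $T^c$, verify the off-diagonal conditions on that family suffice, and then assemble the decoding measurement on $T$. Throughout I must keep track of the fact that the diagonal terms $\bra{\psi_i}E\ket{\psi_i}$ are left completely unconstrained here; it is precisely this relaxation of the Knill--Laflamme conditions that separates sharing a classical secret from sharing a full quantum state, and it is what must be exploited, not fought against, when building the recovery map.
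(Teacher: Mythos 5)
First, a point of reference: the paper itself does not prove this lemma. It is imported from Gottesman (Theorem~9 of \cite{gottesman00}), with the proof deferred to that reference, so your attempt can only be judged on its own completeness rather than against an in-paper argument. On that score, your unauthorized direction is sound and complete: an operator $F$ on $T$ acts as $F\otimes I$, so $\bra{\psi_i}F\ket{\psi_i}=\tr(F\rho_i^T)$, and since a density operator on $T$ is determined by the expectations of a spanning family of operators on $T$, the $i$-independence of all such expectations is equivalent to the $i$-independence of $\rho_i^T$, which is exactly what it means for $T$ to learn nothing. The problem is the authorized condition: you correctly flag its ``if'' direction as the genuine obstacle, but then leave it as a plan (``pass to a spanning family \ldots assemble the decoding measurement'') rather than an argument. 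Since that step carries the entire content of the second half of the lemma, the proposal as written has a genuine gap; your ``only if'' argument (an adversary outside $T$ ``blurring'' two secrets) is also only heuristic as stated.

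The missing idea is a partial-trace identity that converts condition~(\ref{eq:authSet}) into a distinguishability statement on $T$. For any operator $E$ on the complement $T^c$ one has $\bra{\psi_i}(I_T\otimes E)\ket{\psi_j}=\tr\bigl(E\,\tr_T(\ket{\psi_j}\bra{\psi_i})\bigr)$, so condition~(\ref{eq:authSet}) holds for all $E$ on $T^c$ if and only if $\tr_T\bigl(\ket{\psi_j}\bra{\psi_i}\bigr)=0$ for all $i\neq j$. Writing Schmidt decompositions $\ket{\psi_i}=\sum_k \sqrt{p_{ik}}\,\ket{a_{ik}}\otimes\ket{b_{ik}}$ across the cut $T:T^c$, the operators $\ket{b_{jk}}\bra{b_{il}}$ are linearly independent, so this partial trace vanishes if and only if $\langle a_{il}\vert a_{jk}\rangle=0$ for all $k,l$, i.e., if and only if the supports of $\rho_i^T$ and $\rho_j^T$ are mutually orthogonal. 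This single equivalence finishes both directions of the authorized claim at once: if the supports are orthogonal, the projective measurement on $T$ whose elements are the projectors onto those supports identifies $i$ with certainty, which is recovery of a classical secret; conversely, any recovery procedure on $T$ is a POVM $\{M_i\}$ on $T$ with $\tr(M_i\rho_j^T)=\delta_{ij}$, and perfect single-shot distinguishability forces the supports to be mutually orthogonal, hence condition~(\ref{eq:authSet}) to hold. Supplying this identity (or an equivalent Knill--Laflamme-style argument specialized to erasure of $T^c$ with the diagonal conditions dropped, as you intended) is what your proposal still needs.
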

We can state these conditions more informally. For an unauthorized set $T$, there is no measurement that can be
be performed on the qubits in the support of $T$ that can extract any information about the states $\ket{\psi_i}$.
Since an authorized set $T$ is to recover the secret, it can in effect correct erasures on the complement of 
$T$. If the conditions hold for any orthonormal basis of the space spanned by $\ket{\psi_i}$, then these states can
also be used for sharing quantum states, see for instance \cite[Theorem~1]{gottesman00}, \cite[Theorem~7]{cleve99}. 

\begin{remark}\label{rm:conditionsCSS}
We need not consider all operators on $T$, we only need to consider a basis of the operators on $T$. For $q$-ary
quantum systems schemes this basis of error operators can be identified with vectors in $\F_{q}^{2n}$. 
\end{remark}
We assume some background in (nonbinary) quantum codes, the reader
can refer to \cite{ketkar06} for more details.  Let $q$ be the power of a prime $p$. Let 
$B=\{ \ket{x}\mid x\in \F_q \}$ denote an orthonormal basis for
$\C^q$. For $a,b \in \F_q$, we define operators $X(a)$ and $Z(b)$
by 
\begin{eqnarray}
X(a)\ket{x}=\ket{x+a} \quad Z(b)\ket{x} =\omega^{\tr_{q/p}(bx)}\ket{x},
\end{eqnarray}
where $x\in \F_q$, $\omega = e^{j2\pi/p}$, and $j=\sqrt{-1}$.
These operators form a basis for error operators over a single qudit. 
Over $n$ qudits, we define the error operator 
$$
 X(a)Z(b) = X(a_1)Z(b_1)\otimes \cdots \otimes X(a_n)Z(b_n)
$$ 
for $a=(a_1,\dots,a_n)\in \F_q^n$ and 
$b=(b_1,\dots,a_b)\in \F_q^n$. The error  operators $\mc{E}= \{ X(a)Z(b) \mid  a, b \in \F_q^n\}$  form a basis for error
operators over $n$ qudits. We shall often denote $X(a)Z(b)$ by its representative over $\F_q^{2n}$ as $(a|b)$.
We say that an error operator $X(a)Z(b)$ has a support over $T \subseteq \{1,\ldots,n \}$ if 
$(a_t,b_t)\neq (0,0)$ for all $ t\in T$, and $(a_t,b_t)=(0,0)$ otherwise.

\section{Sharing Classical Secrets }
In this section we shall show that a pure $[[n,1,d]]$ CSS code can be
converted into a secret sharing scheme. We shall also characterize the
access structure of the scheme by using the notion of minimal
codewords.  Throughout this section we assume that the $[[n,1,d]]$
code under consideration has been derived from a classical code $C
\supseteq C^\perp$ with the parameters $[n,k,d]_q$ whose parity check
matrix is given as $H = \left[\begin{array}{cc} I_{n-k} &P
\end{array}\right]$. The dual code $C^\perp$ is defined as
$C^\perp=\{x\in \F_q^n \mid x\cdot c=0\mbox{ for all } c\in C \}$.
The stabilizer (matrix) of the CSS code is given as
\begin{eqnarray}
S= \left[\begin{array}{c|c}H &0\\0 &H\end{array}\right].\label{eq:cssStab}
\end{eqnarray}
Recall that the errors detectable by the CSS code are in $\F_q^{2n}\setminus (C\oplus C)$ or $C^\perp\oplus C^\perp$, where $C\oplus C$ is the direct sum of $C$ with itself. The undetectable errors are in 
$(C\oplus C )\setminus C^\perp\oplus C^\perp)$.

To define the minimal access structure of the secret sharing scheme we need the 
notion of minimal codewords. Let  $x,y \in \F_q^n $, then $x$ is said to cover $y$ if the support of 
$x$ contains the support of $y$. Alternatively, $y_i$ is zero whenever $x_i=0$, where we assume that
$x=(x_1,\ldots,x_n)$ and $y=(y_1,\ldots,y_n)$. A codeword $x \in C$ is said to be a minimal codeword 
if  
\begin{enumerate}[i)]
\item its left most  component is 1 and
\item it does not cover any other codeword of $C$ except scalar multiples of $x$. 
\end{enumerate}
A codeword which only satisfies ii) is said to be a minimal support. Every minimal codeword is of course
a minimal support. Minimal codewords were first introduced by
Massey in the context of classical secret sharing schemes, enabling a one to one correspondence with
minimal authorized sets. Minimal supports play an important role in studying the local equivalence of 
stabilizer states.  We also need the following lemma.

\begin{lemma}\label{lm:qdistNonbin}
Let $Q$ be a pure $[[n,1,d]]_q$ CSS code derived from $C^\perp \subseteq C\subseteq\F_q^n$.
For any two vectors $x,y\in C\setminus C^\perp$ we have $x\cdot y \neq 0$. If $q=2$, we
have $x\cdot y =1$ and $d$ odd.
\end{lemma}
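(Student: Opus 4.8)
The plan is to reduce everything to a single vector. Since $Q$ is an $[[n,1,d]]_q$ CSS code built from $C^\perp \subseteq C$, the number of encoded qudits equals $\dim C - \dim C^\perp = 1$, so the quotient $C/C^\perp$ is one-dimensional over $\F_q$. First I would fix any $v \in C \setminus C^\perp$; then $\langle v\rangle \cap C^\perp = \{0\}$, and a dimension count gives the internal direct sum $C = \langle v \rangle \oplus C^\perp$. Consequently every $x \in C\setminus C^\perp$ can be written uniquely as $x = \alpha v + w$ with $\alpha \in \F_q^\ast$ and $w \in C^\perp$; the scalar $\alpha$ must be nonzero, since $\alpha = 0$ would force $x \in C^\perp$.

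The crux of the first claim is that $v \cdot v \neq 0$, which I would prove by contradiction. Suppose $v \cdot v = 0$. For any $c \in C$ write $c = \lambda v + u$ with $u \in C^\perp$; since $v \in C$ and $u \in C^\perp$ we have $v \cdot u = 0$, whence $v \cdot c = \lambda (v\cdot v) + v\cdot u = 0$. Thus $v$ is orthogonal to all of $C$, i.e.\ $v \in C^\perp$, contradicting $v \in C\setminus C^\perp$. With $v\cdot v \neq 0$ established, I would expand $x \cdot y$ for $x = \alpha v + w_1$ and $y = \beta v + w_2$: the cross terms $v\cdot w_2$ and $w_1 \cdot v$ vanish because $v \in C$ while $w_i \in C^\perp$, and $w_1 \cdot w_2 = 0$ because $w_1 \in C^\perp \subseteq C$ forces $w_2 \cdot w_1 = 0$. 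Hence $x\cdot y = \alpha\beta\,(v\cdot v) \neq 0$, proving the general statement.

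For $q = 2$ the only nonzero scalar is $1$, so $\alpha = \beta = 1$ and $x\cdot y = v\cdot v$, which, being nonzero in $\F_2$, equals $1$. To obtain $d$ odd I would exploit the $\F_2$ identities $\wt(u) \equiv u\cdot u \pmod 2$ and $|\supp(a)\cap\supp(b)| \equiv a\cdot b \pmod 2$, which yield $\wt(a|b)\equiv a\cdot a + b\cdot b + a\cdot b \pmod 2$ for the symplectic weight of an operator with representative $(a|b)$. The distance is the minimum weight over the undetectable nontrivial operators, i.e.\ over $(a|b) \in (C\oplus C)\setminus(C^\perp\oplus C^\perp)$. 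A short case analysis on whether each of $a,b$ lies in $C^\perp$ or in $C\setminus C^\perp$ — using $a\cdot a = 1$ on $C\setminus C^\perp$ and $a\cdot a = 0$ on the self-orthogonal $C^\perp$, together with $a\cdot b = 0$ whenever one argument lies in $C^\perp$ — shows $\wt(a|b)$ is odd in every admissible case. Since the minimum of a set of odd integers is odd, $d$ is odd.

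I expect the only delicate point to be the $d$-odd claim, where I must (i) correctly translate the symplectic weight $\wt(a|b)$ into the mod-$2$ expression in $a\cdot a$, $b\cdot b$, $a\cdot b$, and (ii) ensure that the excluded stabilizer case $a,b\in C^\perp$ (which carries even weight) cannot sneak in and pull the minimum below an odd value. The first part, by contrast, is a clean linear-algebra computation once $v\cdot v \neq 0$ is in hand, and notably neither part seems to require the full strength of purity beyond the hypotheses $C^\perp \subseteq C$ and $\dim C - \dim C^\perp = 1$.
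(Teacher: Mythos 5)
Your proof is correct, and its core is the same as the paper's: fix a representative $v$ of the one-dimensional quotient $C/C^\perp$, write $x=\alpha v+w_1$, $y=\beta v+w_2$ with $w_i\in C^\perp$, and reduce everything to $x\cdot y=\alpha\beta\,(v\cdot v)$. Where you genuinely add value is at the step the paper merely asserts: the paper says ``since $C\neq C^\perp$, there is at least one $c\in C\setminus C^\perp$ with $c\cdot c\neq 0$'' and offers no justification, whereas you prove the stronger statement that \emph{every} $v\in C\setminus C^\perp$ satisfies $v\cdot v\neq 0$, by noting that $v\cdot v=0$ together with $C=\langle v\rangle\oplus C^\perp$ would force $v\perp C$, i.e.\ $v\in C^\perp$. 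This argument is not a pedantic addition: the usual polarization trick that would justify the paper's assertion in odd characteristic (all vectors isotropic implies totally isotropic) fails over $\F_2$, so a codimension-one argument like yours is exactly what is needed to make the paper's claim airtight in all characteristics. For the claim that $d$ is odd, the two routes diverge slightly: the paper disposes of it in one line by invoking the CSS fact that $d$ is governed by the weights of elements of $C\setminus C^\perp$ (all of which are odd since $c\cdot c=1$), while you work directly with undetectable errors $(a|b)\in(C\oplus C)\setminus(C^\perp\oplus C^\perp)$ via the identity $\wt(a|b)\equiv a\cdot a+b\cdot b+a\cdot b \pmod 2$ and a three-case analysis (correctly using part one of the lemma when both $a,b\in C\setminus C^\perp$). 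Your version is longer but self-contained, and it explicitly confirms that the even-weight stabilizer elements $C^\perp\oplus C^\perp$ are excluded from the minimization by definition of the distance. Your closing remark that purity is never really used is also accurate: the paper invokes purity only to pin down the classical parameters in its opening sentence, which plays no role in the rest of the argument.
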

\begin{proof}
Given the parameters of the quantum code the codes $C$ and $C^\perp$ must have the parameters
$[n,k,d]_q$ and $[n,n-k=k-1,d]_q$ respectively. Since $C\neq C^\perp$, it follows that there is 
at least one vector $c$ in $C\setminus C^\perp$, that satisfies $c\cdot c\neq 0$. Because $\dim C -\dim C^\perp=1$ 
we infer that $c$ and $C^\perp$ generate $C$. Therefore for any two vectors $x,y$ in $C\setminus C^\perp$
we can write them as $x=\alpha c+s_x$ and $y=\beta c+s_y$ for some $s_x,s_y\in C^\perp$ and $\alpha,\beta\in \F_q^{\times}$. Hence, $x\cdot y = (\alpha c+s_x)\cdot (\beta c+s_y)= \alpha\beta (c\cdot c) \neq0$.
If $q=2$, then it follows that $x\cdot y=1$. In particular $x\cdot x=1$, which implies that the weight of $x$ 
must be odd. Since the minimum distance depends on the weight of elements in $C\setminus C^\perp$, we conclude 
that $d$ is odd.
\end{proof}

\subsection{Proposed Secret Sharing Scheme}
First we shall describe the scheme and then show that it is indeed a valid secret sharing
scheme. 
\begin{theorem}\label{th:cssQss}
Let $Q$ be a pure $[[n,1,d]]_q$ CSS code derived from a classical code $C^\perp\subseteq C\subseteq \F_q^n$. 
Let $\mc{E}$ be the encoding given by the CSS code 
\begin{eqnarray}
\mc{E}:\ket{i}\mapsto \sum_{x\in C^\perp}\ket{x+i g}\quad i \in \F_q, \label{eq:encQss}
\end{eqnarray}
where $g \in C\setminus C^\perp$ and $g\cdot g=\beta$. Distribute  the $n$ qudits as the $n$ shares
for a secret sharing scheme, $\Sigma$. The minimal access structure  $\Gamma_m$ is given by 
\begin{eqnarray}
\Gamma_m = \left\{ \supp(c)\, |\,\begin{array}{l} c \mbox{ is a minimal codeword in } C\setminus C^\perp
\end{array} \right\}
\end{eqnarray}
Let $c=\alpha g+s_c$  be a minimal codeword for some $s_c\in C^\perp$.
The reconstruction for the authorized set $\supp(c)$ derived from $c$
is to compute 
\begin{eqnarray}
(\alpha\beta)^{-1}\sum_{j\in \supp(c)} c_j S_j, \label{eq:recovery}
\end{eqnarray} 
where $S_j$ is the share of the $j$th party.
\end{theorem}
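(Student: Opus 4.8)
The plan is to verify the two defining properties of a perfect secret sharing scheme, namely recoverability for authorized sets and secrecy for unauthorized sets, using Gottesman's conditions from Lemma~\ref{lm:accessConditions}. I would organize the proof around two main tasks: first, showing that every set in the claimed $\Gamma_m$ is authorized and that the reconstruction formula \eqref{eq:recovery} actually outputs the secret $i$; second, showing that the complement of any authorized set is unauthorized, so that the access structure is exactly characterized by minimal codewords of $C\setminus C^\perp$.

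First I would establish recoverability. Given a minimal codeword $c=\alpha g + s_c$ with $s_c\in C^\perp$, I would compute the action of the operator $\sum_{j\in\supp(c)} c_j S_j$ on an encoded basis state $\mc{E}\ket{i}=\sum_{x\in C^\perp}\ket{x+ig}$. Here $S_j$ should be interpreted as the $Z$-type syndrome measurement associated with coordinate $j$; applying $Z(c_j)$ on qudit $j$ and summing gives a phase governed by $\tr_{q/p}\big((c\cdot(x+ig))\,\big)$. Since $c\in C$ and $x\in C^\perp$, we have $c\cdot x=0$, so the relevant inner product collapses to $c\cdot(ig) = i\,(\alpha g + s_c)\cdot g = i\,\alpha\,(g\cdot g)=i\alpha\beta$, using $s_c\cdot g=0$ because $s_c\in C^\perp$ and $g\in C$. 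This is precisely why the normalization $(\alpha\beta)^{-1}$ appears: it rescales the measured quantity $i\alpha\beta$ back to the secret $i$. The key structural input is Lemma~\ref{lm:qdistNonbin}, which guarantees $g\cdot g=\beta\neq 0$ so that $(\alpha\beta)^{-1}$ is well-defined, and the authorization itself follows from Gottesman's condition \eqref{eq:authSet} once I confirm that errors on the complement of $\supp(c)$ are correctable.

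Next I would handle secrecy and the minimality/exactness of the access structure. For a set $T$, I would invoke Remark~\ref{rm:conditionsCSS} to reduce the operator conditions to checking representative error operators $(a|b)\in\F_q^{2n}$ supported on $T$. The set $T$ fails to be authorized exactly when there is no correctable error pattern linking it to a codeword in $C\setminus C^\perp$; conversely $T$ is authorized precisely when $\supp(c)\subseteq T$ for some $c\in C\setminus C^\perp$. The translation to \emph{minimal} codewords comes from the covering relation: $\supp(c)$ is a minimal authorized set iff $c$ covers no other codeword of $C\setminus C^\perp$ except its scalar multiples, which is exactly the definition of a minimal codeword. I would then argue that the complement of an authorized set satisfies the unauthorized condition \eqref{eq:unauthSet} by showing that any $F$-type operator on $T^c$ acts identically (up to a common scalar $c(F)$) on all basis states $\mc{E}\ket{i}$, which again reduces to orthogonality $c\cdot x=0$ for $x\in C^\perp$.

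The main obstacle I expect is bridging Gottesman's abstract operator conditions with the concrete combinatorial statement about minimal codewords. The recoverability computation is essentially a direct phase calculation and should be routine once the interpretation of $S_j$ as a $Z$-measurement is fixed. The subtler point is proving the exact correspondence $T\in\Gamma_m \iff T=\supp(c)$ for a minimal codeword $c\in C\setminus C^\perp$, which requires carefully relating the correctability of erasures on $T^c$ (governed by whether $C\oplus C$ contains an undetectable error supported off $T$) to the existence of a minimal-support codeword in $C\setminus C^\perp$ sitting inside $T$. Establishing that the restriction to $C\setminus C^\perp$ (rather than all of $C$) is the right condition, and that minimality of the codeword matches minimality of the authorized set, is where I anticipate the real work lies.
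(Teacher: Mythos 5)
Your framework matches the paper's: Gottesman's conditions (Lemma~\ref{lm:accessConditions}), the reduction to error operators $(a|b)\in\F_q^{2n}$ via Remark~\ref{rm:conditionsCSS}, and the dot-product identity $c\cdot(x+ig)=i\alpha\beta$ underlying recovery. But the proposal leaves genuine gaps in both of the hard directions, which you flag as ``where the real work lies'' but never carry out. First, minimality: to show a proper subset $T\subsetneq\supp(c)$ of a minimal codeword's support is unauthorized, the paper argues that any operator $(a|b)$ supported on $T$ cannot lie in $C\oplus C$ (otherwise $a$ or $b$ would be a nonzero codeword of $C$ covered by $c$, contradicting minimality of $c$), hence is detectable by $Q$ and therefore satisfies condition~(\ref{eq:unauthSet}). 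Second, completeness: to show every minimal authorized set lies in $\Gamma_m$, the paper takes a minimal authorized $T$, extracts an operator $E=(a|b)$ supported on $T$ that violates~(\ref{eq:unauthSet}), shows $E$ must lie in $(C\oplus C)\setminus(C^\perp\oplus C^\perp)$ because detectable operators cannot violate that condition, concludes that some component, say $a$, lies in $C\setminus C^\perp$ so that $\supp(a)=T$ by minimality of $T$, and then runs a careful covering/linear-combination argument to show $a$ is in fact a \emph{minimal} codeword. Your proposal asserts the end results of both steps (``$T$ is authorized precisely when $\supp(c)\subseteq T$ for some $c\in C\setminus C^\perp$''; ``$\supp(c)$ is a minimal authorized set iff $c$ covers no other codeword of $C\setminus C^\perp$'') but supplies neither argument. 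Note also that your minimality criterion restricts covering to $C\setminus C^\perp$, whereas the paper's definition of a minimal codeword forbids covering any other codeword of $C$, including elements of $C^\perp$; the two are equivalent, but only via the linear-combination step (if $c$ covers a nonzero $d\in C^\perp$, then some $c-\lambda d\in C\setminus C^\perp$ is covered by $c$ and has strictly smaller support), so the equivalence cannot be dismissed as ``exactly the definition.''

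A smaller but real issue is your reading of the reconstruction. You interpret $S_j$ as a $Z$-type measurement, so that recovery extracts a phase governed by $\tr_{q/p}(i\alpha\beta)$. For non-prime $q$ the trace map $\F_q\to\F_p$ is not injective, so that phase alone does not determine $i$. In the theorem, $S_j$ is simply the $j$th share (a qudit), and $\sum_j c_j S_j$ is computed coherently into an ancilla using the multiplier gate $M(c_j)$ and the generalized CNOT gate, producing the $\F_q$-valued register $\ket{i\alpha\beta}$; this recovers $i$ exactly for any prime power $q$ and leaves the encoded state intact, which is why the paper's recoverability step works without any trace considerations.
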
 
\begin{proof}
The proof of this theorem is a little long, so we shall break it into parts. First we shall show that the
minimal codewords define authorized sets i.e., they can recover the secret. Next we shall show that the associated
authorized sets are minimal. Thirdly, we shall show
that $\Gamma$ is complete i.e., every minimal authorized set is in $\Gamma$.
\begin{enumerate}[1)]
\item Recoverability: 
Let $c$ be a codeword in  $C\setminus C^\perp$, not necessarily minimal. 
Then $c$ can be written as  $c=\alpha g+s_c$ for some $s_c\in C$ and $\alpha\in \F_q^\times$. 
Adjoining an ancilla and computing the dot product with $c =\alpha g+s_c$ we get 
\begin{eqnarray*}
\ket{0}\ket{i g+C^\perp}&\mapsto & \sum_{x\in C^\perp}\ket{c\cdot x + c\cdot i g}\ket{x+ig},\\
&=&\sum_{x\in C^\perp}\ket{c\cdot x + \alpha g \cdot i g + s_c \cdot i g}\ket{x+ig},
\end{eqnarray*}
Since $c,g\in C \setminus C^\perp$ and $x,s_c\in C^\perp$ we have $c\cdot x=s_c\cdot ig =0$. 
Let $g\cdot g =\beta$, then by Lemma~\ref{lm:qdistNonbin},  $\beta\neq 0$ and is invertible in $\F_q$.
It follows
\begin{eqnarray*}
\ket{0}\ket{ig+C^\perp} &\mapsto& \ket{\alpha \beta i} \sum_{x\in C^\perp}\ket{x +i g}.
\end{eqnarray*}
Since both $\alpha$ and $\beta$ are known the secret can be recovered from the
ancilla which is in the state $\ket{\alpha\beta i}$. This proves that these subsets can reconstruct
the secret and they indeed define authorized sets. So every code word in $C\setminus C^\perp$ can define
an authorized set but it need not be minimal. Consequently every minimal codeword in $C\setminus C^\perp$ 
also defines an authorized set. 

\item 
Minimality of the authorized sets: Now let $c$ be a minimal codeword. We shall show that in this
case that any proper subset of $\supp(c)$ cannot reconstruct the secret. Let $T$ be a proper
subset of $\supp(c)$. Let the error operator $E=(a|b)$ be supported in $T$ where $a,b\in \F_q^n$, then  
$(a|b)$ cannot be a codeword in $C\oplus C$. Suppose it were a codeword in $C\oplus C$, then both $a,b \in C$.
Since $E$ is nontrivial at least one of $a$ and $b$ is nonzero and covered  by $c$, but then $c$ would not be a
minimal codeword. Therefore any error on  $T$  must be in $\F_q^{2n}\setminus (C\oplus C)$. 
But this means  that any such operator is detectable by the quantum code $Q$. If it is detectable, then it must not
reveal any information about the encoded states. In particular, it implies that $T$ satisfies 
equation~(\ref{eq:unauthSet}). Therefore every proper subset of $\supp(c)$ is an unauthorized set. This shows that $\supp(c)$ is a minimal authorized set.

\item 
Completeness of $\Gamma_m$:
Next we show that all minimal authorized sets are in $\Gamma_m$. Assume that there exists a minimal
authorized set $T$ which is not in $\Gamma_m$. Then $T$ must satisfy 
equation~(\ref{eq:authSet}). Additionally, $T$ fails to satisfy equation~(\ref{eq:unauthSet})
while every proper subset of $T$ being an unauthorized set does satisfy equation~(\ref{eq:unauthSet}).
This forces the existence of an operator $E=(a|b)$, with  $\supp(E)=T$,
that violates equation~(\ref{eq:unauthSet}). Now $E$  cannot be in 
$\F_q^n\setminus (C\oplus C) $ or $C^\perp\oplus C^\perp $, as these operators are detectable and cannot violate equation~(\ref{eq:unauthSet}). Therefore, $E$ must be in $(C\oplus C)\setminus (C^\perp\oplus C^\perp)$. 
Further, $(a|b) \in C\oplus C$ implies that $a,b\in C$. Now both $a,b$ cannot be in $C^\perp\subset C$
as then $(a|b)$ would be entirely in $C^\perp\oplus C^\perp$ and it would be detectable and cannot define
an authorized set. So at least one of $a,b$ is in $C \setminus C^\perp$. Without loss of generality let us
assume that $a\in C\setminus C^\perp$. But we already saw in step~1), that
any codeword in $C\setminus C^\perp$ defines an authorized set. So $\supp((a|0))=\supp(a)$ is itself an authorized set.
Since $(a|b)$ is a minimal authorized set,  $\supp(a)=\supp(a|b)=T$. 

Suppose that
$a$ is not a minimal codeword. Then there is some vector in $ C$ that is covered by $a$ and is not a scalar multiple
of $a$. First we show that there exists no $d\in C$ such that $\supp(d) \subsetneq T$. 
If $\supp(d)$ was a proper subset of $\supp(a)$, then $d$ cannot be in $C\setminus C^\perp$ as it would then
define an authorized set that is a proper subset of the minimal authorized set $T$. If $d$ is in $C^\perp$, then
there exists a linear combination of $a$ and $d$ with support strictly a subset of $T$. Further this linear 
combination is also in $C\setminus C^\perp$ and by step~1) it would define an authorized set violating the minimality
of $T$. Therefore any $d\in C$  covered by $a$ and not a scalar multiple of $a$ must have $\supp(d)=T$. But 
this implies that $C$ contains a linear combination of $a$ and $d$ with support strictly less than $T$ violating our previous conclusion that there exists no such element in $C$. Therefore $a$ is a minimal codeword of $C$ and it lies in $C\setminus C^\perp$. (If the left most component of $a$ is not 1 we can choose a scalar multiple of it so that 
it is 1. In any case, $a$ and its scalar multiples have same support and they correspond to the same (minimal)
authorized set).
\end{enumerate}
\end{proof}

Since a codeword of minimum distance does not cover any other codeword,  there always exists a
scalar multiple of it which is a minimal codeword. Therefore, the minimal access structure always
contains the sets corresponding to the support of the every minimum distance
codeword in $C\setminus C^\perp$. 

\begin{corollary}
In the secret sharing scheme specified in Theorem~\ref{th:cssQss}, the support of every minimum distance 
codeword in $C\setminus C^\perp$ gives rise to a minimal authorized set.
\end{corollary}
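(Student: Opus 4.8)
The plan is to deduce the corollary directly from Theorem~\ref{th:cssQss}, whose minimal access structure $\Gamma_m$ is exactly the set of supports of minimal codewords of $C\setminus C^\perp$. Since scalar multiples have the same support, and a minimal codeword differs from a minimal support only by the normalization of its leftmost (nonzero) component, it suffices to prove that every minimum distance codeword of $C\setminus C^\perp$ is a minimal support, and then invoke the theorem. Recall that $C$ has parameters $[n,k,d]_q$, so $d$ is the minimum weight of $C$, and by purity a weight-$d$ codeword exists in $C\setminus C^\perp$.

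First I would establish the key claim: a codeword $c\in C\setminus C^\perp$ with $\wt(c)=d$ covers no codeword of $C$ other than its scalar multiples. Suppose $c$ covers a nonzero $y\in C$, so that $\supp(y)\subseteq\supp(c)$. Then $\wt(y)\le\wt(c)=d$, and since $d$ is the minimum weight of $C$ we must have $\wt(y)=d$, whence $\supp(y)=\supp(c)$. Choosing any coordinate $i\in\supp(c)$ and forming $c-(c_i y_i^{-1})y\in C$ annihilates the $i$th coordinate and so produces a codeword of weight strictly less than $d$; by minimality of $d$ this codeword is $0$, so $c$ is a scalar multiple of $y$. Hence $c$ satisfies condition~(ii) and is a minimal support.

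Next I would normalize. Multiplying $c$ by the inverse of its leftmost nonzero component yields a codeword $c'$ with $\supp(c')=\supp(c)$ whose leftmost component is $1$; because $C^\perp$ is a subspace closed under scalar multiplication and $c\notin C^\perp$, we still have $c'\in C\setminus C^\perp$. Thus $c'$ is a genuine minimal codeword of $C\setminus C^\perp$, and by the characterization of $\Gamma_m$ in Theorem~\ref{th:cssQss} its support $\supp(c')=\supp(c)$ lies in $\Gamma_m$, i.e.\ it is a minimal authorized set.

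I expect the only substantive step to be the covering claim in the second paragraph; the normalization and the passage to $\Gamma_m$ are routine, the former because the access structure depends on supports alone (so the scalar ambiguity is immaterial) and the latter because it is precisely the content of Theorem~\ref{th:cssQss}.
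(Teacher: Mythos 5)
Your proposal is correct and follows the same route as the paper, whose justification is the remark preceding the corollary: a minimum distance codeword covers no other codeword (up to scalars), so a suitable scalar multiple is a minimal codeword and its support lies in $\Gamma_m$ by Theorem~\ref{th:cssQss}. You simply supply the standard coding-theoretic details (the subtraction argument forcing equality of supports, and the normalization of the leftmost component) that the paper leaves implicit.
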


If $q=2$, then we can simplify the reconstruction process, we only need to take the parity of the parties
in the minimal authorized set. 

\begin{corollary}
Let $Q$ be a pure $[[n,1,d]]_2$ CSS code derived from a classical code $C^\perp\subseteq C\subseteq \F_2^n$. 
Let $\mc{E}$ be the encoding given by the CSS code 
\begin{eqnarray}
\mc{E}:\ket{i}\mapsto \sum_{x\in C^\perp}\ket{x+i g}\quad i \in \F_2, \label{eq:encQssBin}
\end{eqnarray}
where $g \in C\setminus C^\perp$. Distribute  the $n$ qubits as the $n$ shares
for a secret sharing scheme, $\Sigma$. The minimal access structure  $\Gamma_m$ is given by 
\begin{eqnarray}
\Gamma_m = \left\{ \supp(c)\, |\,\begin{array}{l} c \mbox{ is a minimal codeword in } C\setminus C^\perp
\end{array} \right\}
\end{eqnarray}
The reconstruction for an authorized set is to simply compute the parity of the set (into an ancilla).
\end{corollary}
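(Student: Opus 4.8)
The plan is to obtain this statement as a direct specialization of Theorem~\ref{th:cssQss} to the case $q=2$, so that the access-structure claim is inherited verbatim and the only genuinely new content is the simplification of the reconstruction formula~(\ref{eq:recovery}) to a parity.

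First I would observe that the characterization of the minimal access structure $\Gamma_m$ requires no new argument: nothing in the derivation of $\Gamma_m$ in Theorem~\ref{th:cssQss} used $q>2$, so setting $q=2$ throughout yields precisely $\Gamma_m = \{\, \supp(c) \mid c \text{ is a minimal codeword in } C\setminus C^\perp \,\}$. Likewise the encoding~(\ref{eq:encQssBin}) is just~(\ref{eq:encQss}) restricted to $i\in\F_2$, so recoverability, minimality, and completeness all transfer directly.

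The remaining task is to show that~(\ref{eq:recovery}) collapses to computing the parity, which I would do by tracking each scalar factor in the binary case. Since $\F_2^{\times}=\{1\}$, any minimal codeword $c\in C\setminus C^\perp$ written as $c=\alpha g+s_c$ has $\alpha=1$. By Lemma~\ref{lm:qdistNonbin}, for $q=2$ every vector of $C\setminus C^\perp$ (in particular $g$) satisfies $g\cdot g=1$, so $\beta=1$ and hence $(\alpha\beta)^{-1}=1$. Moreover, over $\F_2$ the coefficients $c_j$ appearing in~(\ref{eq:recovery}) equal $1$ exactly for $j\in\supp(c)$. Substituting these values reduces~(\ref{eq:recovery}) to $\sum_{j\in\supp(c)}S_j$, which is precisely the parity of the shares held by the authorized set, computed into an adjoined ancilla exactly as in step~1) of the proof of Theorem~\ref{th:cssQss}.

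I expect essentially no obstacle here beyond this bookkeeping: the entire structural content is inherited from Theorem~\ref{th:cssQss}, and the single place that genuinely invokes the binary hypothesis is the appeal to Lemma~\ref{lm:qdistNonbin} giving $\beta=g\cdot g=1$. This is what forces the normalizing constant to be trivial and degenerates the general weighted sum into a plain parity check, which is the only statement being added over the $q$-ary theorem.
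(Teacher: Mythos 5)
Your proposal is correct and matches the paper's treatment: the paper states this corollary without a separate proof, precisely because it is the direct specialization of Theorem~\ref{th:cssQss} to $q=2$, with Lemma~\ref{lm:qdistNonbin} giving $g\cdot g=1$ so that $\alpha=\beta=1$ and the weighted sum~(\ref{eq:recovery}) degenerates to the parity $\sum_{j\in\supp(c)}S_j$. Your bookkeeping of the scalars $\alpha$, $\beta$, and $c_j$ is exactly the intended argument.
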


The secret can be encoded using the encoding methods of CSS codes, see \cite{grassl03}.
Reconstructing the secret for these schemes is extremely simple as shown below.  
We will need the multiplier gate $M(c)$ and the  generalized CNOT gate, $A$ shown below.

\[
\Qcircuit @C=1em @R=.7em {
& \gate{c}&\qw && \ctrl{1}&\qw  \\
& &  & &\targ &\qw & \\
& &  & & & & & &&&\\
& \text{i)}&  & &\text{ii)} & 
}
\]
Their action on the basis states is given as:
\begin{compactenum}[i)]
\item $M(c)\ket{x}=\ket{cx}, c\in \F_q^\times$
\item $A\ket{x} \ket{y} = \ket{x}\ket{x+y}$
\end{compactenum}

The recovery as given in equation~(\ref{eq:recovery}) is computed by performing the following
operation for each $c_j\neq 0$.
\[
\Qcircuit @C=1.4em @R=1.2em {
\lstick{\ket{S_j}}&\gate{c_j}&\ctrl{1} &\gate{c_j^{-1}}&\qw&\\
\lstick{\ket{anc}}&\qw&\targ&\qw&\qw&
}
\]
The final scaling by $(\alpha\beta)^{-1}$ can be done classically.
\subsection{Illustration}
We illustrate the strategy using a $[[11,1,3]]$ CSS code \cite{grassl00}
 it can be derived from a 
code $C$ with the following generator and parity check matrices. 
\begin{eqnarray}
G&=& \left[\begin{array}{ccccccccccc} 
1& 0& 0& 0& 0& 0& 0& 1& 0& 0& 1 \\
 0& 1& 0& 0& 0& 0& 1& 1& 1& 1& 1 \\
 0& 0& 1& 0& 0& 0& 0& 0& 0& 1& 1 \\
 0& 0& 0& 1& 0& 0& 1& 0& 0& 0& 1 \\
 0& 0& 0& 0& 1& 0& 1& 1& 1& 1& 0 \\
0& 0& 0& 0& 0& 1& 0& 0& 1& 0& 1 
\nix{  1& 0& 0& 0& 0& 1& 0& 1& 1& 0& 0 \\
 0& 1& 0& 0& 0& 0& 1& 1& 1& 1& 1 \\
 0& 0& 1& 0& 0& 1& 0& 0& 1& 1& 0 \\
 0& 0& 0& 1& 0& 1& 1& 0& 1& 0& 0 \\
 0& 0& 0& 0& 1& 1& 1& 1& 0& 1& 1 \\\hline
0& 0& 0& 0& 0& 1& 0& 0& 1& 0& 1 
}
\end{array}\right]\\
H&=&\left[ \begin{array}{ccccccccccc} 
 1& 0& 0& 0& 0& 1& 0& 1& 1& 0& 0 \\
 0& 1& 0& 0& 0& 0& 1& 1& 1& 1& 1 \\
 0& 0& 1& 0& 0& 1& 0& 0& 1& 1& 0 \\
 0& 0& 0& 1& 0& 1& 1& 0& 1& 0& 0 \\
 0& 0& 0& 0& 1& 1& 1& 1& 0& 1& 1 
\end{array}\right]
\end{eqnarray}
Let us encode the secret 
\begin{eqnarray}
\ket{s} &\mapsto& \sum_{c\in C^\perp} \ket{c+s e},
\end{eqnarray}
where 
$e=\left[\begin{array}{ccccccccccc}0&0 & 0&0 &0 &1& 0& 0& 1&0 & 1 \end{array}\right]$.
The secret sharing scheme assumes that we distribute each qubit as a  share.
The minimal access structure of the secret sharing scheme is given by $\Gamma_m$.

$$
\Gamma_m = \left\{ \begin{array}{c}
\{ 3, 10, 11 \};
\{ 6, 9, 11 \};
\{ 4, 7, 11 \};
\{ 2, 5, 11 \};\\
\{ 1, 8, 11 \};
\{ 2, 3, 4, 6, 8 \};\\
\{ 4, 5, 6, 8, 10 \};
\{ 1, 3, 4, 5, 6 \};
\{ 1, 2, 4, 6, 10 \};\\
\{ 3, 4, 5, 8, 9 \};
\{ 2, 4, 8, 9, 10 \};
\{ 1, 2, 3, 4, 9 \};\\
\{ 1, 4, 5, 9, 10 \};
\{ 3, 5, 6, 7, 8 \};
\{ 2, 6, 7, 8, 10 \};\\
\{ 1, 2, 3, 6, 7 \};
\{ 1, 5, 6, 7, 10 \};
\{ 5, 7, 8, 9, 10 \};\\
\{ 2, 3, 7, 8, 9 \};
\{ 1, 3, 5, 7, 9 \};
\{ 1, 2, 7, 9, 10 \}
\end{array}
\right\}
$$

It can be checked that the parity of any of these subsets will give $s$.
Further, any subset that contains an element of $\Gamma_m$ as a subset can also 
perform reconstruction. Please note that this is not a threshold scheme, there
exist minimal authorized sets of size three and five.

\section{Conclusion}
In this paper we have given new methods to share classical secrets
using quantum information. We have been able to strengthen the connection
between quantum secret sharing schemes and quantum error correcting
codes and given a new characterization of the access structure in
terms of minimal codewords. This characterization is potentially of
larger applicability, and its extension to additive quantum codes and
quantum secrets will be explored elsewhere.


\begin{thebibliography}{10}
\expandafter\ifx\csname natexlab\endcsname\relax\def\natexlab#1{#1}\fi
\expandafter\ifx\csname bibnamefont\endcsname\relax
  \def\bibnamefont#1{#1}\fi
\expandafter\ifx\csname bibfnamefont\endcsname\relax
  \def\bibfnamefont#1{#1}\fi
\expandafter\ifx\csname citenamefont\endcsname\relax
  \def\citenamefont#1{#1}\fi
\expandafter\ifx\csname url\endcsname\relax
  \def\url#1{\texttt{#1}}\fi
\expandafter\ifx\csname urlprefix\endcsname\relax\def\urlprefix{URL }\fi
\providecommand{\bibinfo}[2]{#2}
\providecommand{\eprint}[2][]{\url{#2}}

\bibitem[{\citenamefont{Hillery et~al.}(1999)\citenamefont{Hillery, Buzek, and
  Berthaume}}]{hillery99}
\bibinfo{author}{\bibfnamefont{M.}~\bibnamefont{Hillery}},
  \bibinfo{author}{\bibfnamefont{V.}~\bibnamefont{Buzek}}, \bibnamefont{and}
  \bibinfo{author}{\bibfnamefont{A.}~\bibnamefont{Berthaume}},
  \bibinfo{journal}{Phys. Rev. A} \textbf{\bibinfo{volume}{59}},
  \bibinfo{pages}{1829} (\bibinfo{year}{1999}).

\bibitem[{\citenamefont{Massey}(1993)}]{massey93}
\bibinfo{author}{\bibfnamefont{J.~L.} \bibnamefont{Massey}}, in
  \emph{\bibinfo{booktitle}{Proc. 6th Joint Swedish-Russian Workshop on
  Information Theory, M\"{o}lle, Sweden}} (\bibinfo{year}{1993}), pp.
  \bibinfo{pages}{276--279}.

\bibitem[{\citenamefont{Cleve et~al.}(1999)\citenamefont{Cleve, Gottesman, and
  Lo}}]{cleve99}
\bibinfo{author}{\bibfnamefont{R.}~\bibnamefont{Cleve}},
  \bibinfo{author}{\bibfnamefont{D.}~\bibnamefont{Gottesman}},
  \bibnamefont{and} \bibinfo{author}{\bibfnamefont{H.-K.} \bibnamefont{Lo}},
  \bibinfo{journal}{Phys. Rev. Lett.} \textbf{\bibinfo{volume}{83}},
  \bibinfo{pages}{648} (\bibinfo{year}{1999}).

\bibitem[{\citenamefont{Gottesman}(2000)}]{gottesman00}
\bibinfo{author}{\bibfnamefont{D.}~\bibnamefont{Gottesman}},
  \bibinfo{journal}{Phys. Rev. A} \textbf{\bibinfo{volume}{61}}
  (\bibinfo{year}{2000}).

\bibitem[{\citenamefont{Rietjens et~al.}(2005)\citenamefont{Rietjens,
  Schoenmakers, and Tuyls}}]{rietjens05}
\bibinfo{author}{\bibfnamefont{K.}~\bibnamefont{Rietjens}},
  \bibinfo{author}{\bibfnamefont{B.}~\bibnamefont{Schoenmakers}},
  \bibnamefont{and} \bibinfo{author}{\bibfnamefont{P.}~\bibnamefont{Tuyls}}, in
  \emph{\bibinfo{booktitle}{Proc. 2005 IEEE Intl. Symposium on Information
  Theory, Adelaide, Australia}} (\bibinfo{year}{2005}), pp.
  \bibinfo{pages}{1598--1602}.

\bibitem[{\citenamefont{Karlsson et~al.}(1999)\citenamefont{Karlsson, Koashi,
  and Imoto}}]{karlsson99}
\bibinfo{author}{\bibfnamefont{A.}~\bibnamefont{Karlsson}},
  \bibinfo{author}{\bibfnamefont{M.}~\bibnamefont{Koashi}}, \bibnamefont{and}
  \bibinfo{author}{\bibfnamefont{N.}~\bibnamefont{Imoto}},
  \bibinfo{journal}{Phys. Rev. A} \textbf{\bibinfo{volume}{59}},
  \bibinfo{pages}{162} (\bibinfo{year}{1999}).

\bibitem[{\citenamefont{Imai et~al.}(2004)\citenamefont{Imai, M\"{u}ller-Quade,
  Nascimento, Tuyls, and Winter}}]{imai04}
\bibinfo{author}{\bibfnamefont{H.}~\bibnamefont{Imai}},
  \bibinfo{author}{\bibfnamefont{J.}~\bibnamefont{M\"{u}ller-Quade}},
  \bibinfo{author}{\bibfnamefont{A.}~\bibnamefont{Nascimento}},
  \bibinfo{author}{\bibfnamefont{P.}~\bibnamefont{Tuyls}}, \bibnamefont{and}
  \bibinfo{author}{\bibfnamefont{A.}~\bibnamefont{Winter}},
  \bibinfo{journal}{Quantum Information \& Computation}
  \textbf{\bibinfo{volume}{5}}, \bibinfo{pages}{068} (\bibinfo{year}{2004}).

\bibitem[{\citenamefont{Ketkar et~al.}(2006)\citenamefont{Ketkar, Klappenecker,
  Kumar, and Sarvepalli}}]{ketkar06}
\bibinfo{author}{\bibfnamefont{A.}~\bibnamefont{Ketkar}},
  \bibinfo{author}{\bibfnamefont{A.}~\bibnamefont{Klappenecker}},
  \bibinfo{author}{\bibfnamefont{S.}~\bibnamefont{Kumar}}, \bibnamefont{and}
  \bibinfo{author}{\bibfnamefont{P.~K.} \bibnamefont{Sarvepalli}},
  \bibinfo{journal}{IEEE Trans. Inform. Theory} \textbf{\bibinfo{volume}{52}},
  \bibinfo{pages}{4892} (\bibinfo{year}{2006}).

\bibitem[{\citenamefont{Grassl et~al.}(2003)\citenamefont{Grassl, R{\"o}tteler,
  and Beth}}]{grassl03}
\bibinfo{author}{\bibfnamefont{M.}~\bibnamefont{Grassl}},
  \bibinfo{author}{\bibfnamefont{M.}~\bibnamefont{R{\"o}tteler}},
  \bibnamefont{and} \bibinfo{author}{\bibfnamefont{T.}~\bibnamefont{Beth}},
  \bibinfo{journal}{Internat. J. Found. Comput. Sci.}
  \textbf{\bibinfo{volume}{14}}, \bibinfo{pages}{757} (\bibinfo{year}{2003}).

\bibitem[{\citenamefont{Grassl}(2000)}]{grassl00}
\bibinfo{author}{\bibfnamefont{M.}~\bibnamefont{Grassl}}
  (\bibinfo{year}{2000}), \bibinfo{note}{{L}ower Bounds and Encoding Circuits
  of Weakly Self-Dual CSS Codes http://avalon.ira.uka.de/home/grassl/QECC/CSS}.

\end{thebibliography}
\end{document}